\newtheorem{theorem}{Theorem}
\begin{document}

\preprint{APS/123-QED}

\title{Non-Gaussianity and entropy-bounded uncertainty relations: Application to detection of non-Gaussian entangled states}

\author{Kyunghyun Baek}
\affiliation{Department of Physics, Texas A$\&$M University at Qatar, Education City, P.O. Box 23874, Doha, Qatar}%
\author{Hyunchul Nha}%
\affiliation{Department of Physics, Texas A$\&$M University at Qatar, Education City, P.O. Box 23874, Doha, Qatar}%


\begin{abstract}
We suggest an improved version of Robertson-Schr\"odinger uncertainty relation for canonically conjugate variables by taking into account a pair of characteristics of states: non-Gaussianity and mixedness quantified by using fidelity and entropy, respectively. This relation is saturated by both Gaussian and Fock states, and provides strictly improved bound for any non-Gaussian states or mixed states. For the case of Gaussian states, it is reduced to the entropy-bounded uncertainty relation derived by Dodonov. Furthermore, we consider readily computable measures of both characteristics, and find weaker but more readily accessible bound. With its generalization to the case of two-mode states, we show applicability of the relation to detect entanglement of non-Gaussian states.
\end{abstract}

\maketitle

\section{Introduction}

Ever since the birth of quantum mechanics, the Heisenberg uncertainty principle \cite{Heisenberg1927} has played a key role to understand and explore fundamental nature of quantum mechanics. The first rigorously proven uncertainty relation for the case of canonically conjugate variables like the position and momentum satisfying $[\hat x, \hat p]=i$ is due to Kennard \cite{Kennard1927} and Weyl \cite{Weyl1928} and implies a fundamental limitation on preparing quantum states well-localized in both position $x$ and momentum $p$ space. As its generalization, Robertson-Schr\"odinger(RS) uncertainty relation was derived for a pair of Hermitian operators \cite{Robertson1929,Schrodinger1930}.

For the case of canonically conjugate variables, it is well-known that the RS relation provides an inequality invariant under symplectic transformations and saturated for pure Gaussian states. To improve its bound, on the one hand, additional information on mixedness was considered in the form of generalized purities\cite{Bastiaans1984,Dodonov2002} or von Neumann entropy \cite{Bastiaans1986,Dodonov2002}. More recently, on the other hand, with a better understanding of the notion of non-Gaussianity \cite{Genoni2008,Genoni2010} the RS relation was rederived under constraints on the degree of Gaussianity \cite{Mandilara2012} that was subsequently generalized to involve purity \cite{Mandilara2014}. Additionally, in the context of entropic uncertainty relations \cite{Hirschman1957,Biaynicki1975,Beckner1975} providing a stronger uncertainty bound than the Kennard-Weyl uncertainty relation, the role of non-Gaussianity was explicitly examined \cite{Son2015} and applied to generalize it in terms of entropy power \cite{Hertz2017}.

Beyond the fundamental interest with the improvement of uncertainty relations for non-Gaussian states, another important motivation is that improved uncertainty relations can be used to derive entanglement criteria for continuous variable systems \cite{Duan2000,Simon2000,Shchukin2005,Hillery2006,Nha2008,Walborn2009,Nha2012,Hertz2016}.
However, in spite of these efforts, the verification of entanglement for non-Gaussian states, which play an important role in quantum information processing \cite{Horodecki2009,Lloyd1999,Nha2004,*Garcia2004,Eisert2002,*Fiurasek2002,*Giedke2002,Niset2009}, has not been completely resolved  partly due to the nonoptimality of uncertainty relations for non-Gaussian states, while it has been done for Gaussian states. 

In this work, our main goal is to present improved version of RS uncertainty relation by taking into account additional characteristics of states: mixedness and non-Gaussianity together. For this purpose, we consider von Neumann entropy as a measure of the mixedness and suggest a non-Gaussianity measure based on fidelity between a state and its reference Gaussian state with the same covariance matrix. We show that this non-Gaussianity measure obeys required properties as a legitimate measure of non-Gaussianity. By using this measure, we find improved RS relation with an uncertainty bound monotonically increasing with respect to both quantities and generalize it to two-mode states. Additionally, we present a less improved but readily computable version of the RS uncertainty relation by using more easily computable measures of both entropy and non-Gaussianity. Finally, we show its applicability to detect entanglement of non-Gaussian states in combination with partial transposition.

\section{Variance-based uncertainty relations}

For the case of the canonically conjugate variables, the RS uncertainty relation can be written in terms of a covariance matrix, 
\begin{align}\label{RSUR}
\sqrt{\det {{V}}}\geq \frac{1}{2}.
\end{align}
Here, $V$ is a covariance matrix of $\hat \rho$ whose matrix elements are given by
$
V_{ij}= \frac{1}{2} \langle \{\hat r_i,\hat r_j\}\rangle-\langle \hat r_i\rangle\langle \hat r_j\rangle
$
for a vector of quadrature operators $\hat{\vec{r}}=(\hat x,\hat p)$ 
with the expectation value $\langle \hat O \rangle=\text{Tr}[ \hat O \hat \rho]$ and the anticommutator $\{\cdot,\cdot\}$. This form of uncertainty relation allows one to straightforwardly identify the invariance of the RS relation under linear canonical transformations that transform the covariance matrix into $SVS^T$ under the symplectic transformation $S\in Sp_{2,\mathbb{R}}$ with $\det S=1$ \cite{Simon1994}.

The set of states saturating the RS uncertainty relation consists of pure Gaussian states. That means any mixed or non-Gaussian state has a nontrivial covariance matrix whose $\sqrt{\det V}$ is bigger than 1/2. For mixed states, it was shown that the RS relation has larger bound with a fixed von Neumann entropy, 
\begin{align}\label{Entropy}
S(\rho)=-\text{Tr}[\hat\rho \ln \hat\rho],
\end{align}
in the form of
\begin{align}\label{EUR}
\sqrt{\det {V}}\geq h^{-1}\big(S(\rho)\big),
\end{align}
where the bound is given by the inverse of monotonically increasing function,
$h(x)=-\left(x-\frac{1}{2}\right) \ln\left(x-\frac{1}{2}\right)+\left(x+\frac{1}{2}\right) \ln\left(x+\frac{1}{2}\right)$, for $x>1/2$.
This relation is the so-called entropy-bounded uncertainty relation, and it is well known that it is saturated by all Gaussian states \cite{Dodonov2002}.

Also, generalized purities \cite{Bastiaans1984} can be considered as measures of the mixedness, and the so-called purity-bounded uncertainty relation was derived with a fixed purity in \cite{Dodonov2002}. However, only a specific set of non-Gaussian states is included in the states saturating the uncertainty relations. Therefore, we show here that one can appropriately improve the uncertainty relations by incorporating the entropy as a measure of mixedness together with a measure of non-Gaussianity to deal with a broad class of non-Gaussian states.

\section{Quantification of non-Gaussianity}

For an $N$-mode system described by mode operators $\hat a_k=(\hat x_k+ i \hat p_k)/\sqrt{2}$, a quantum state $\hat \rho$ is referred as a Gaussian state if its quasiprobability functions such as Wigner function are written in a Gaussian form, which in turn is fully determined by its first and second moments of $x$ and $p$. From an operational point of view, we can also refer to the Gaussian states as states generated by acting the linear canonical transformations on vacuum or thermal states, $\otimes_{k=1}^N \hat \tau(\bar n_k)$, where $\hat \tau(\bar n_k)=(1+\bar n_k)^{-1}(\bar n_k/(1+\bar n_k))^{\hat a^\dagger_k \hat a_k}$ is a thermal state of $k$th mode with average photon number $\bar n_k$. This transformation can be described by the symplectic transformation with the translation in the phase space of $x$ and $p$. 
For a single mode system, a Gaussian state can be generally expressed in the form of 
\begin{align}
\hat \rho_G= \hat D(\alpha) \hat S(\xi) \hat \tau(\bar n)  \hat S^\dagger(\xi) \hat D^\dagger(\alpha),
\end{align}
where $\hat D(\alpha)=\exp[\alpha \hat a^\dagger - \alpha^* \hat a]$ is the displacement operator and $\hat S(\xi)=\exp[(\xi (\hat a^\dagger)^2-\xi^* \hat a^2)/2]$ is the single-mode squeezing operator with $\alpha,\xi\in\mathbb C$.

The quantification of non-Gaussianity was proposed as measuring distance between a given state $\hat \rho$ and its reference Gaussian state $\hat \rho_G$ with the same first and second moments of $\hat\rho$ in terms of relative entropy \cite{Genoni2008,Genoni2010} and Hilber-Schmidt distance \cite{Genoni2007}. 
In a similar manner, we propose a non-Gaussianity measure by using Uhlmann fidelity as a measure of distance
\begin{align}
\mathcal{N}(\rho)=-2\ln F(\rho,\rho_G)=-2\ln \text{Tr}\left[\sqrt{\sqrt{\hat\rho}\hat\rho_G\sqrt{\hat\rho}}\right],
\end{align}
where the fidelity $F(\rho,\rho_G)$ characterizes a distance between a state $\hat\rho$ and its reference Gaussian state $\hat\rho_G$.
In accordance with the previous work \cite{Genoni2008}, appropriate properties as a measure of non-Gaussianity are examined as follows:

\begin{itemize}
	
	\item[($\mathcal N$1)] $\mathcal N(\rho)=0$ if and only if $\hat\rho$ is a Gaussian state, otherwise it gives rise to a nonzero positive value.
	
	{\it Proof.} Uhlmann fidelity between arbitrary states $\hat\rho$ and $\hat\sigma$ becomes unity if and only if they are equal, that is, $F(\rho,\sigma)=1$ if and only if $\hat\rho=\hat\sigma$. Thus, $\mathcal N(\rho)=0$ if and only if $\hat\rho=\hat\rho_G$, i.e. $\hat\rho$ is a Gaussian state. 
	
	\item[($\mathcal N$2)] $\mathcal N(\rho)$ is invariant under Gaussian unitary transformations. Namely, if $\hat U$ is a unitary operator corresponding to a symplectic transformation in the phase space, i.e. $\hat U=e^{-i \hat H}$ with Hamiltonian that is at most bilinear in the field operators, then $\mathcal N(U\rho U^\dagger)=\mathcal N(\rho)$.
	
	{\it Proof.}  After unitary transformation corresponding to a symplectic transformation, $\hat\rho$ is changed to $\hat U\hat\rho \hat U^\dagger$ and its reference Gaussian state $\hat\rho_G$ is also to $\hat U\hat\rho_G \hat U^\dagger$. Since the fidelity is invariant under the unitary transformation, i.e. $F(\rho,\rho_G)=F(U \rho U^\dagger, U\rho_G U^\dagger)$, $\mathcal N$ is invariant under Gaussian unitary transformations.
	
	\item[($\mathcal N$3)] $\mathcal N$ is additive for tensor products $\hat \rho^A\otimes \hat\rho^B$, i.e. ${\mathcal N(\rho^A\otimes\rho^B)=\mathcal N(\rho^A)+\mathcal N(\rho^B)}$.
	
	{\it Proof.}  Reference Gaussian state of $\hat\rho^A\otimes \hat\rho^B$ is given by $\hat\rho^A_G\otimes\hat\rho^B_G$. Thus, due to the multiplicavity of the fidelity, i.e. $F(\rho^A\otimes\rho^B,\sigma^A\otimes\sigma^B)=F(\rho^A,\sigma^A)F(\rho^B,\sigma^B)$, we have $\mathcal N(\rho^A\otimes \rho^B)=\mathcal N(\rho^A)+\mathcal N(\rho^B)$.
	
	\item[($\mathcal N$4)] $\mathcal N$ is non-increasing with respect to partial trace, $\mathcal N(\rho^{AB}) \geq \mathcal N(\rho^A)$.
	
	{\it Proof.} For a bipartite state $\hat\rho^{AB}$, its reduced state is defined as $\text{Tr}_B[\hat\rho^{AB}]=\hat\rho^A$. Covariance matrix of $\hat\rho^A$ is given by the same elements of  covariance matrix of $\hat\rho^{AB}$ only determined by the expectation values of $A$ system. Thus, reduced state of the reference Gaussian state of $\hat\rho^{AB}$ is equal to the reference Gaussian state of reduced state $\hat\rho^A$, that is $\text{Tr}_B[\hat\rho_G^{AB}]=\hat\rho_G^A$. Then, applying the monotonicity of the fidelity under partial trace, that is, $F(\rho^{AB},\rho_G^{AB})\leq F(\rho^A,\rho_G^A)$, one can prove ($\mathcal N$-4). 
	
	\item[($\mathcal N$5)] $\mathcal N$ monotonically decreases under Gaussian quantum channels, $\mathcal N(\rho) \geq \mathcal N(\mathcal E_G(\rho))$, where $\mathcal E_G$ is a Gaussian quantum channel.
	
	{\it Proof.}  Any Gaussian channel $\mathcal{E}_G$ can be written as $\mathcal{E}_G(\hat\rho)=\text{Tr}_E[\hat U(\hat\rho\otimes \hat\rho^E_G)\hat U^\dagger]$, where $\hat U$ corresponds to a symplectic transformation in phase space after including an ancillary system $E$ with its state $\hat\rho^E_G$. Then according to ($\mathcal N$-4), we have $\mathcal N\big(U(\rho\otimes\rho_G^E)U^\dagger\big)=\mathcal N\big(\rho\otimes\rho_G^E\big)$ and by taking partial trace over $E$ and using ($\mathcal N$-2) and ($\mathcal N$-3), we have $\mathcal N\big(\rho\big)\geq \mathcal N\big(\mathcal E_G(\rho)\big)$.
\end{itemize}

It is worth noting that our non-Gaussianity measure $\mathcal N$ is a special case of the quantum R\'enyi relative entropy \cite{Muller-Lennert2013} of $\hat \rho$ with respect to $\hat \sigma$, which is given by
\begin{align}\label{RenDiv}
S_\alpha(\rho\|\sigma)=\frac{1}{\alpha-1}\ln \left(\text{Tr}\left[\left(\hat\sigma^{\frac{1-\alpha}{2\alpha}}\hat\rho\hat\sigma^{\frac{1-\alpha}{2\alpha}}\right)^\alpha\right]\right),
\end{align}
for the order $\alpha\geq1/2$ and $\alpha\neq 1$. More specifically, $\mathcal N$ corresponds to the quantum R\'enyi relative entropy for $\alpha=1/2$ such that $S_{1/2}(\rho\|\sigma)=-2\ln F(\rho,\sigma)$. For $\alpha=1$ as another special case, it becomes the quantum relative entropy $S(\rho\|\sigma)=\text{Tr}[\hat \rho\ln \hat \rho-\hat\rho\ln\hat\sigma]$, which has been employed as a non-Gaussianity measure in \cite{Genoni2008}.

In general, the quantum R\'enyi relative entropy for any order $\alpha$ can be employed as a measure of non-Gaussianity since  $S_{\alpha}(\rho\|\rho_G)$ satisfies the required properties of the non-Gaussianity measure that can be shown by using its own properties addressed in \cite{Muller-Lennert2013}. However, as pointed out in \cite{Park2017}, these measures may not be readily computable because of the difficulty of solving the eigenvalue problem in the infinite dimensional Hilbert space. 
To deal with the difficulty, we consider the fidelity-based non-Gaussianity as one can alternatively adopt superfidelity \cite{Mendonca2008,Miszczak2009},
\begin{align}
G(\rho,\sigma)^2\equiv\text{Tr} [\hat \rho \hat \sigma]+\sqrt{1-\text{Tr} [\hat \rho^2]}\sqrt{1-\text{Tr} [\hat \sigma^2]},
\end{align}
that allows one to use the phse-space description for calculation. This quantity becomes unity if and only if $\hat \rho=\hat \sigma$, providing upper bound of the usual fidelity, i.e. $$F(\rho,\sigma)\leq G(\rho,\sigma).$$ 
That means alternatively one can employ it to quantify readily computable non-Gaussianity measure 
\begin{align}\label{Ng}
\mathcal{N}_g(\rho)= -2\ln G(\rho,\rho_G)\leq \mathcal{N}(\rho)
\end{align}
which gives strictly nonzero values for non-Gaussian states and also provides a lower bound for fidelity-based non-Gaussianity. 

\section{Non-Gaussianity-and entropy-bounded uncertainty relation}

In this section, we present an improved version of the RS relation by taking into account two characteristics of quantum states together: the non-Gaussianity and the mixedness quantified by the non-Gaussianity measure $\mathcal N(\rho)$ and the entropy $S(\rho)$ respectively. This inequality is referred to as non-Gaussianity-and entropy-bounded (NE) uncertainty relation.

\subsection{Non-Gaussianity-and entropy-bounded uncertainty relation for single-mode system}

For a single mode system, we obtain the NE uncertainty relation by incorporating the non-Gaussianity measure $\mathcal N(\rho)$ and the entropy $S(\rho)$ as follows:

\begin{theorem}
	For a single system $\hat \rho$ with the non-Gaussianity $\mathcal N(\rho)$ and the entropy $S(\rho)$, we have
	\begin{align}\label{NEUR}
	\sqrt{\det { V}}\geq h^{-1}(S(\rho)+\mathcal N(\rho)),
	\end{align}
	where $V$ is the covariance matrix of $\hat\rho$ and $h(x)=-\left(x-\frac{1}{2}\right) \ln\left(x-\frac{1}{2}\right)+\left(x+\frac{1}{2}\right) \ln\left(x+\frac{1}{2}\right)$ is a monotonically increasing function of $x>1/2$.
\end{theorem}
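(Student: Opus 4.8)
The plan is to use the monotonicity of $h$ to recast \eqref{NEUR} into the equivalent form $h\big(\sqrt{\det V}\big)\ge S(\rho)+\mathcal N(\rho)$, and then to identify the left-hand side as the von Neumann entropy of the reference Gaussian state $\hat\rho_G$. Indeed, a single-mode Gaussian state with covariance matrix $V$ has symplectic eigenvalue $\nu=\sqrt{\det V}$, and the standard formula for the entropy of a single-mode Gaussian state gives $S(\rho_G)=h(\nu)=h\big(\sqrt{\det V}\big)$. Hence it suffices to prove
\begin{align}
S(\rho_G)\ge S(\rho)+\mathcal N(\rho).
\end{align}

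First I would establish the identity $S(\rho_G)-S(\rho)=S(\rho\|\rho_G)$, i.e.\ that the quantum relative entropy of $\hat\rho$ with respect to its reference Gaussian state coincides with the difference of the two entropies. In the generic case $\sqrt{\det V}>1/2$ the state $\hat\rho_G$ has full support, and from the representation $\hat\rho_G=\hat D(\alpha)\hat S(\xi)\hat\tau(\bar n)\hat S^\dagger(\xi)\hat D^\dagger(\alpha)$ one sees that $\ln\hat\rho_G$ is a Hermitian polynomial at most quadratic in $\hat x,\hat p$ (with a $c$-number offset fixed by normalization). Therefore $\mathrm{Tr}[\hat\rho\ln\hat\rho_G]$ depends only on the first and second moments of $\hat\rho$, which by construction equal those of $\hat\rho_G$; this yields $\mathrm{Tr}[\hat\rho\ln\hat\rho_G]=\mathrm{Tr}[\hat\rho_G\ln\hat\rho_G]=-S(\rho_G)$ and consequently $S(\rho\|\rho_G)=-S(\rho)+S(\rho_G)$.

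Next I would bound the relative entropy from below by the non-Gaussianity. Since $\mathcal N(\rho)=S_{1/2}(\rho\|\rho_G)$ is the quantum R\'enyi relative entropy of order $1/2$, and the quantum R\'enyi relative entropies $S_\alpha(\rho\|\sigma)$ are non-decreasing in the order $\alpha$, we get $S(\rho\|\rho_G)=S_1(\rho\|\rho_G)\ge S_{1/2}(\rho\|\rho_G)=\mathcal N(\rho)$. Combining with the identity of the previous step gives $S(\rho_G)-S(\rho)\ge\mathcal N(\rho)$, which is the desired inequality. The boundary case $\sqrt{\det V}=1/2$ is treated separately: RS saturation forces $\hat\rho$ to be a pure Gaussian state, so $S(\rho)=\mathcal N(\rho)=0$ and \eqref{NEUR} holds with equality in the limiting sense $h^{-1}(0)=1/2$.

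The main obstacle is the relative-entropy identity $S(\rho\|\rho_G)=S(\rho_G)-S(\rho)$: one must argue carefully that $\ln\hat\rho_G$ is genuinely a quadratic observable and that the trace $\mathrm{Tr}[\hat\rho\ln\hat\rho_G]$ is finite and depends on $\hat\rho$ only through the moments shared with $\hat\rho_G$. The remaining ingredients --- monotonicity of $h$, the Gaussian entropy formula, and the monotonicity of $S_\alpha$ in $\alpha$ --- are elementary or quoted from the cited literature.
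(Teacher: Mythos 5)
Your proposal is correct and follows essentially the same route as the paper's own proof: identify $h\bigl(\sqrt{\det V}\bigr)$ with $S(\rho_G)$, use the identity $S(\rho\|\rho_G)=S(\rho_G)-S(\rho)$ (valid because $\ln\hat\rho_G$ is at most quadratic in the field operators and $\hat\rho$ shares its first and second moments), and invoke the monotonicity of the quantum R\'enyi relative entropy in the order $\alpha$ to bound $S(\rho\|\rho_G)\ge S_{1/2}(\rho\|\rho_G)=\mathcal N(\rho)$. Your extra attention to the boundary case $\sqrt{\det V}=1/2$ and to the finiteness of $\mathrm{Tr}[\hat\rho\ln\hat\rho_G]$ goes slightly beyond the paper's argument but does not change the approach.
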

\begin{proof}
	According to the monotonicity of the quantum R\'enyi relative entropy with respect to the order $\alpha$, we have 
	\begin{align}
	S(\rho\|\rho_G)\geq S_{{1}/{2}}(\rho\|\rho_G). 
	\end{align}
	The left hand side can be rewritten as $S(\rho\|\rho_G)=\text{Tr}[\hat\rho\ln\hat\rho-\hat\rho\ln\hat\rho_G]=S(\rho_G)-S(\rho)$, since  $\ln\hat\rho_G$ is determined by at most bilinear in the field operators. Furthermore, the entropy of reference Gaussian state is explicitly expressed in terms of covariance matrix, $S(\rho_G)=h(\sqrt{\det V})$. Thus, by adding the entropy and taking inverse of $h(x)$ on both sides we have the NE uncertainty relation \eqref{NEUR}.
\end{proof}

As desired, the NE uncertainty relation is invariant under Gaussian unitary transformations, since $S(\rho)$, $\mathcal N(\rho)$ and $\det  V$ are all invariant. Furthermore, the NE uncertainty relation is saturated by all Gaussian states and a set of states provided by performing Gaussian unitary transformations on number states. More specifically, for the case of Gaussian states, the NE uncertainty relation reduces to the entropy-bounded uncertainty relation which is saturated for all Gaussian states. On the other hand,  for the case of pure states where the entropy vanishes, the inequality \eqref{NEUR} is also saturated by $|n\rangle$. Due to the invariance under Gaussian unitary transformation, we see that all states obtained by acting them on number states thus saturate the NE uncertainty relation.

Additionally, we can introduce a weaker but readily computable NE uncertainty relation,
\begin{align}\label{NEUR2}
\sqrt{\det { V}}\geq h^{-1}(-\ln\mu+\mathcal N_g(\rho)),
\end{align}
where $\mu=\text{Tr}[\hat \rho^2]$ is the purity. This relation is straightforwardly obtained by directly applying inequalities $S(\rho)\geq -\ln\mu$ and $\mathcal N(\rho)\geq \mathcal N_g(\rho)$ defined in Eq. \eqref{Ng}. All quantities in the bound can be readily calculated using phase-space distributions, and for mixed and non-Gaussian states, both quantities give us nontrivial positive values. 
{For the case of pure states, moreover, we note that the weaker inequality \eqref{NEUR2} becomes identical with the original one \eqref{NEUR} due to $S(\rho)=-\ln \mu=0$ and $\mathcal{N}(\rho)=\mathcal{N}_g(\rho)$.}

In what follows, let us investigate the tightness of inequality \eqref{NEUR} comparing with so-called 'putity-and Gaussianity-bounded(PG) uncertainty relation' \cite{Mandilara2012,Mandilara2014} to show its validity as a refined inequality. In the derivation of PG uncertainty relation, the degree of non-Gaussianity is characterized by
\begin{align}
g (\rho)=\frac{\text{Tr}\left[\hat\rho \hat\rho_G\right]}{\text{Tr}\left[\hat\rho_G^2\right]},
\end{align}
which is called Gaussianity. We note that $g$ holds the invariance under Gaussian unitary transformations and that it becomes unity, i.e. $ g(\rho)=1$, for Gaussian states. However, $ g(\rho)=1$ does not imply $\hat\rho$ is Gaussian, namely, $ g(\rho)=1$ is a necessary, but not sufficient condition for Gaussian states. 
{To focus on behaviors of NE and PG uncertainty relations with respect to the degree of non-Gaussianity, we consider non-Gaussian pure states in what follows. For this purpose, we specify the exact form of the PG uncertainty relation for the case of pure states \cite{Mandilara2012}, i.e. $\mu=1$,
\[\det V\geq
\left\{
\begin{aligned}
&\frac{g}{2(2-g)} && \text{for } g>1,\\
&\frac{2+2\sqrt{1-g}-g}{2g} && \text{for } \frac{2}{e}<g\leq1.
\end{aligned}
\right. \]
}

\begin{figure}[t]
	\centering	
	\includegraphics[scale =0.6]{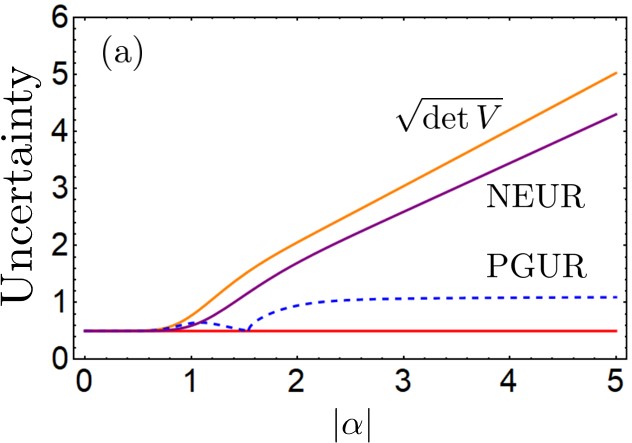}
		
	\includegraphics[scale =0.6]{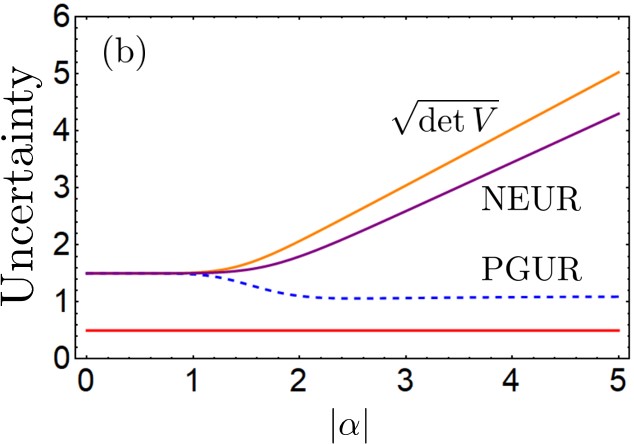} 
	
	\caption{Plots of $\sqrt{\det  V}$ and the bounds of the NE, PG and RS uncertainty relations for (a) even and (b) odd cat states against the  amplitude $|\alpha|$.}\label{CatUR}
\end{figure}

Now, as an example, let us consider even and odd cat states  \cite{Dodonov1974} defined by
\begin{align}
|\psi_{\pm}\rangle=\frac{1}{\sqrt{2(1\pm  e^{-2|\alpha|^2})}}(|\alpha\rangle\pm|-\alpha\rangle),
\end{align}
respectively, where $|\alpha\rangle=\hat D(\alpha)|0\rangle$ is a coherent state with $\alpha=|\alpha|e^{i\theta}$. Here, the angle $\theta$ represents the rotation angle so that it does not affect the degree of non-Gaussianity due to the invariance under the rotation operation. On the other hand, $|\alpha|$ represents the distance between two coherent states, $|\alpha\rangle$ and $|-\alpha\rangle$.
To illustrate our NE uncertainty relation as a refined one, we plot the NE and PG uncertainty relations for even and odd cat states according to the amplitude $|\alpha|$ in Figs. \ref{CatUR}(a) and \ref{CatUR}(b), respectively.

For the case of even cat states, they reduce to vacuum state as $|\alpha|$ goes to 0. Thus, when $|\alpha|$ is close to 0, all uncertainty relations provides optimal bound as observed in Fig. \ref{CatUR}(a), since they are all saturated for vacuum states. However, as $|\alpha|$ increases, uncertainty characterized by $\sqrt{\det  V}$ also increases. The bound of NE uncertainty relation shows the same behavior by taking into account non-Gaussianity effect, while the bound of PG uncertainty relation does not. while the PG uncertainty relation gives a stronger bound when $|\alpha|<1.1$, the NE uncertainty relation provides a tighter uncertainty bound in a larger region. In addition, the behavior that the bound of PG uncertainty relation becomes close to 1/2 at $|\alpha|\sim1.57$ is observed because the Gaussianity $g$ can have the value of unity for non-Gaussian states. On the other hand, for the case of odd cat states, they reduce to a single photon state as $|\alpha|$ goes to 0. In this limit, both bounds of NE and PG relations are saturated, since they are optimized for number states. However, as $|\alpha|$ increases,  $\sqrt{\det  V}$ and NE relation show similar behavior, while PG relation does not.\\

As another example of non-Gaussian pure state, let us consider photon-added coherent states provided by adding single photon to a coherent state \cite{Agarwal1991} described by
\begin{align}
|\psi_{pacs}\rangle= \frac{1}{\sqrt{1+|\alpha|^2}}\hat a^\dagger |\alpha\rangle.
\end{align}
In the limit $\alpha\rightarrow 0$, $|\psi_{pacs}\rangle$ reduces to $|1\rangle$, while in the another extreme, $\alpha\rightarrow \infty$, it tends to $|\alpha\rangle$. Thus, as $|\alpha|$ increases from 0, the effect of non-Gaussianity decreases. This trend is clearly observed in Fig. \ref{FigPACS}. 
When $|\alpha|$ is close to $0$, Fig. \ref{FigPACS} shows that both the NE and PG uncertainty relations give the optimal bound, since both are saturated for number states. However, as $|\alpha|$ increases, distinct behaviors are observed. Particularly, when $\alpha\sim 0.56$, $G(|\psi_{pacs})$ becomes unity but $|\psi_{pacs}\rangle$ is not Gaussian at this point. Further, in $|\alpha|\rightarrow\infty$, as $|\psi_{pacs}\rangle$ reduces to $|\alpha\rangle$, both relations are saturated again.

\begin{figure}[t]
	\centering
	\includegraphics[scale =0.3]{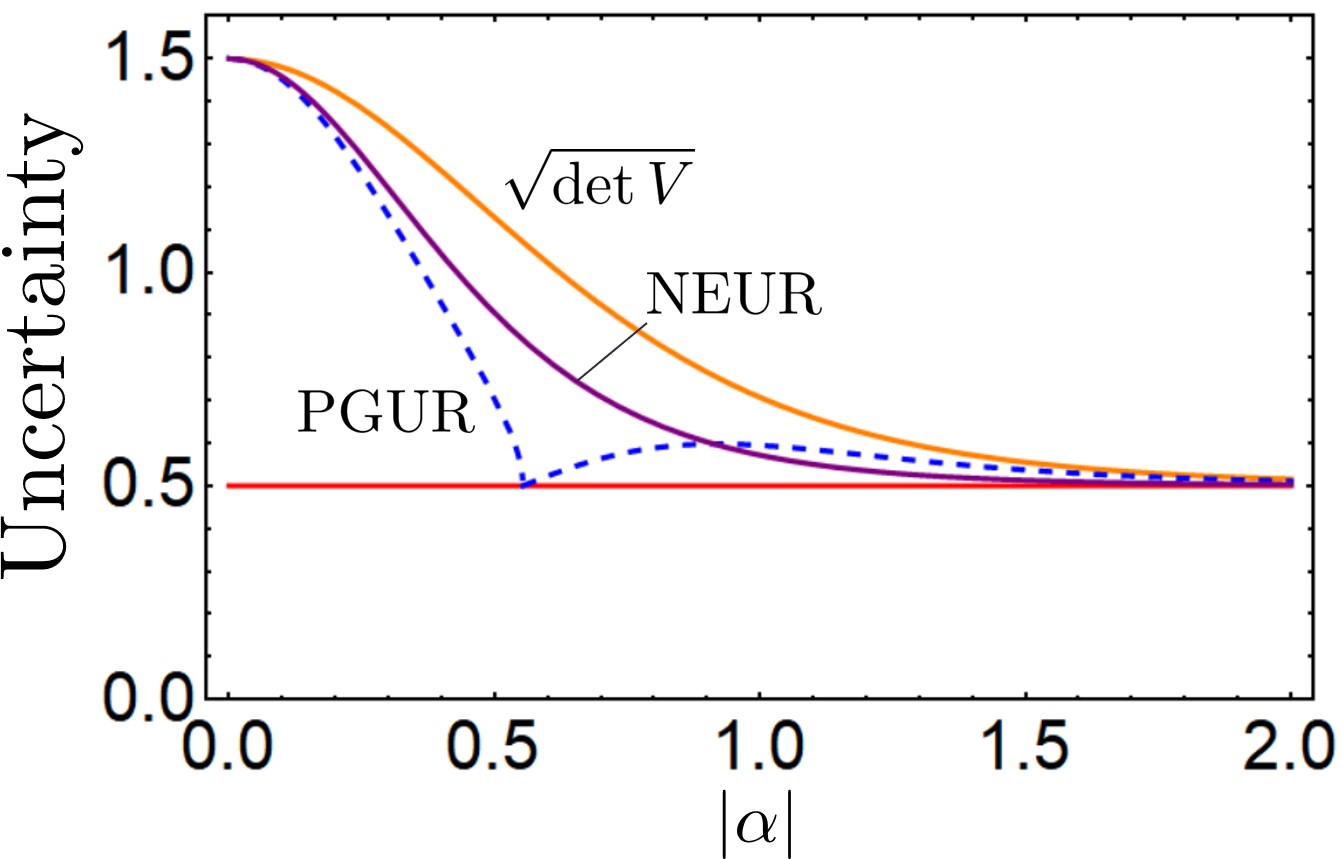}
	\caption{Plot of $\sqrt{\det  V}$  and the bounds of the NE, PG and RS uncertainty relations for photon-added coherent states $|\psi_{pacs}\rangle$ against $|\alpha|$.}\label{FigPACS}
\end{figure}

\subsection{Generalization for two-mode system}

The NE uncertainty relation can be straightforwardly generalized to the case of two-mode system in the same manner as with the case of single mode system.

\begin{theorem}
	For two-mode system $\hat \rho^{AB}$ with the non-Gaussianity measure $\mathcal N (\rho^{AB})$ and the entropy $S(\rho^{AB})$, we have
	\begin{align}\label{NEUR3}
	h(\nu_+)+h(\nu_-)\geq S(\rho^{AB})+N(\rho^{AB}),
	\end{align}
	where $\nu_{+}\geq \nu_{-}$ are symplectic eigenvalues of covariance matrix for canonical operators of two-mode system.
\end{theorem}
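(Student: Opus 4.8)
The plan is to follow the same three-step route as in the single-mode relation \eqref{NEUR}, the only genuinely new ingredient being the expression for the von Neumann entropy of a two-mode reference Gaussian state in terms of its symplectic eigenvalues. First I would apply the monotonicity of the quantum R\'enyi relative entropy in the order $\alpha$, now on the four-dimensional phase space of the two-mode system, to get $S(\rho^{AB}\|\rho_G^{AB})\geq S_{1/2}(\rho^{AB}\|\rho_G^{AB})=\mathcal N(\rho^{AB})$, where $\hat\rho_G^{AB}$ is the reference Gaussian state sharing the first and second moments of $\hat\rho^{AB}$.

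Next I would rewrite the left-hand side. Since $\ln\hat\rho_G^{AB}$ is at most bilinear in the four quadrature operators $(\hat x_A,\hat p_A,\hat x_B,\hat p_B)$, the quantity $\mathrm{Tr}[\hat\rho^{AB}\ln\hat\rho_G^{AB}]$ depends only on the first and second moments of $\hat\rho^{AB}$, which by construction coincide with those of $\hat\rho_G^{AB}$; hence $\mathrm{Tr}[\hat\rho^{AB}\ln\hat\rho_G^{AB}]=\mathrm{Tr}[\hat\rho_G^{AB}\ln\hat\rho_G^{AB}]=-S(\rho_G^{AB})$, so that $S(\rho^{AB}\|\rho_G^{AB})=S(\rho_G^{AB})-S(\rho^{AB})$, exactly as in the single-mode case.

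The key step is then to compute $S(\rho_G^{AB})$. By Williamson's theorem the covariance matrix $V$ of $\hat\rho_G^{AB}$ can be brought to normal form $V=S\,\mathrm{diag}(\nu_+,\nu_+,\nu_-,\nu_-)\,S^T$ by a symplectic $S$, where $\nu_+\geq\nu_-$ are the symplectic eigenvalues; the associated Gaussian unitary (together with the displacement removing any first moments, which does not change the entropy) maps $\hat\rho_G^{AB}$ to a tensor product of two single-mode thermal states with mean occupations fixed by $\nu_\pm$. Using unitary invariance and additivity of the von Neumann entropy together with the single-mode identity $S=h(\nu)$ already used in Theorem~1, one obtains $S(\rho_G^{AB})=h(\nu_+)+h(\nu_-)$. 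Substituting into the inequality of the previous paragraph and rearranging yields $h(\nu_+)+h(\nu_-)\geq S(\rho^{AB})+\mathcal N(\rho^{AB})$, which is \eqref{NEUR3}.

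I do not expect a serious obstacle here; the points that need care are (i) justifying that the R\'enyi monotonicity and the relative-entropy identity carry over to the infinite-dimensional two-mode states involved, which is handled verbatim as in the single-mode proof, and (ii) the bookkeeping of the Williamson reduction, in particular that $h$ enters mode by mode. Because of (ii) the bound appears as the sum $h(\nu_+)+h(\nu_-)$ rather than as an explicit lower bound on $\sqrt{\det V}=\nu_+\nu_-$: unlike the single-mode case one cannot invert $h$ on a single argument, which is precisely why the statement is phrased in the additive form \eqref{NEUR3}.
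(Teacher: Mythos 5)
Your proposal is correct and follows essentially the same route as the paper's proof: Rényi monotonicity in the order $\alpha$, the identity $S(\rho^{AB}\|\rho_G^{AB})=S(\rho_G^{AB})-S(\rho^{AB})$, and the Williamson reduction giving $S(\rho_G^{AB})=h(\nu_+)+h(\nu_-)$. The only cosmetic difference is that the paper also records the explicit formula for $\nu_\pm$ in terms of the symplectic invariants $\det V$ and $\Delta V$, which is not needed for the argument itself.
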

\begin{proof}	
	Covariance matrix for two-mode system is a real $4\times 4$ symmetric positive block matrix 
	\begin{align}
	V=\begin{pmatrix}
	A & C \\
	C^T & B
	\end{pmatrix},
	\end{align}
	where $A$, $B$ and $C$ are $2\times 2$ real matrices. Its symplectic eigenvalues are determined by two symplectic invariants $\det V$ and $\Delta V=\det A+ \det B +2\det C$ as
	\begin{align}\label{SD}
	\nu_\pm=\sqrt{\frac{\Delta V\pm\sqrt{(\Delta V)^2-4 \det V}}{2}}.
	\end{align}
	Thus, we have entropy of two-mode Gaussian state in the form of 
	\begin{align}
	S(\rho^{AB}_G)=h(\nu_+)+h(\nu_-),
	\end{align}
	since one can always achieve symplectic diagonalization leading to thermal states without affecting the entropy. Hence, with the monotonicity of quantum R\'enyi relative entropy in $\alpha$
	\begin{align}
	S(\rho^{AB}\|\rho_G^{AB})&=S(\rho_G^{AB})-S(\rho^{AB})\\ &\geq S_{1/2}(\rho^{AB}\|\rho_G^{AB})=N(\rho^{AB}),\nonumber
	\end{align}
	we have desired inequality \eqref{NEUR3} by adding $S(\rho^{AB})$ on both sides,
	\begin{align}
	S(\rho_G^{AB})\geq S(\rho^{AB})+N(\rho^{AB}),
	\end{align}
	where $\rho_G^{AB}$ is the reference Gaussian state with the same covariance matrix of $\rho^{AB}$.
\end{proof}

The NE uncertainty relation \eqref{NEUR3} has desired properties. First, all quantities in \eqref{NEUR3} are invariant under Gaussian unitary transformations, thus it is possible to verify the inequality regardless of principle axes. Second, the relation \eqref{NEUR3} is saturated for Gaussian states, and imposes enhanced restrictions on $\nu_\pm$ for non-Gaussian states. Generalizaed RS uncertainty relations for multimode system \cite{Simon1994} indicate symplectic eigenvalues should be larger than 1/2, that is $\nu_\pm\geq 1/2$. This is equivalent to the trivial case of \eqref{NEUR3}, where both $S$ and $N$ vanish. It is because $h(x)$ gives us valid values only if $x\geq1/2$. Furthermore, for nontrivial cases, it may impose tighter restrictions on possible values of $\nu_\pm$. This behavior is illustrated in Fig. \ref{FigNEUR2} showing regions of possible values of $\nu_\pm$ with respect to overall bound $B=S(\rho^{AB})+N(\rho^{AB})$.

\begin{figure}[t]
	\centering
	\includegraphics[scale =0.4]{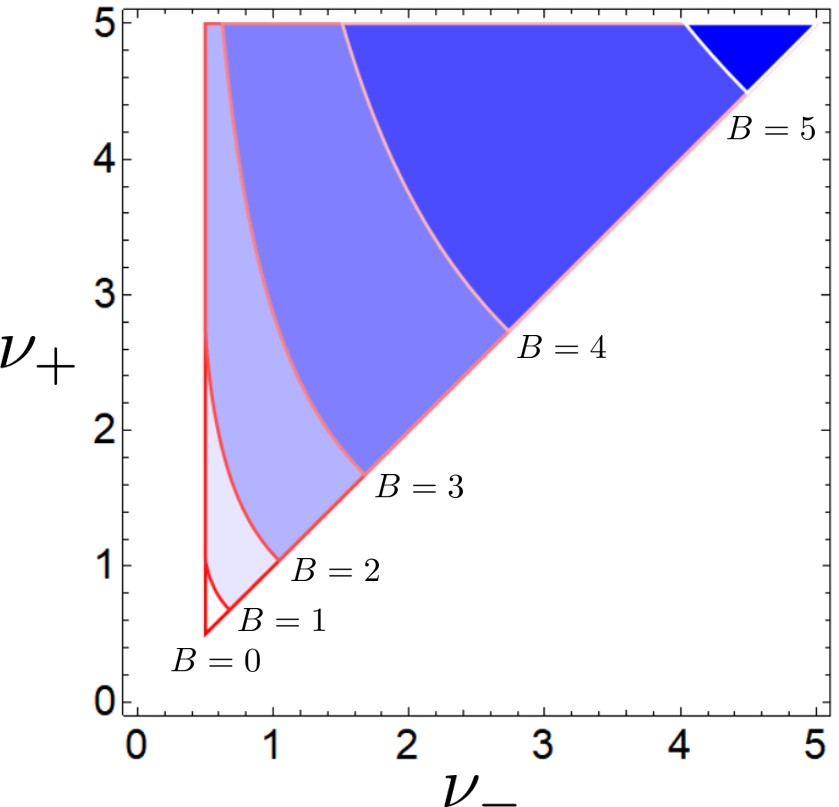}
	\caption{Enhanced restrictions on symplectic eigenvalues $\nu+\geq \nu_-$ according to the bound in inequality \eqref{NEUR3} denoted by $B=S(\rho^{AB})+N(\rho^{AB})$. With increase of $B$ from 0 to 5 distinguished by the contrast of blue, available regions of $\nu_{\pm}$ become more confined. }\label{FigNEUR2}
\end{figure}

\section{Application to entanglement detection of non-Gaussian states}

Entanglement has played crucial roles in quantum information science. However, even if a bipartite state is fully known, it is a NP-hard problem to verify whether it is entangled or not \cite{Gurvits2003}.  Entanglement criterion has been derived in \cite{Peres1996} by observing negativity of partial transposed states, since any separable state remains as a physical state, i.e. positive-semidefinite operator under partial transposition. 
For continuous variable systems, it has been generalized by Simon \cite{Simon2000} and Duan \cite{Duan2000} by addressing its physicality in view of uncertainty relations as follows. Partial transposition on the second mode of $\hat \rho^{AB}$ corresponds to the mirror reflection, $\hat p_B \rightarrow -\hat p_B$ in phase space. Accordingly, the covariance matrix is changed into 
\begin{align}
\tilde{V} = P_B V P_B = \begin{pmatrix}
A & \tilde{C} \\
\tilde{C}^T & \tilde{B}
\end{pmatrix},
\end{align}
with $P_B=diag(1,1,1,-1)$. If $\hat \rho^{AB}$ is a separable state, its partial transposed state should satisfy uncertainty relations, $\tilde{\nu}_{\pm}\geq 1/2$, where $\tilde{\nu}_\pm$ are the symplectic eigenvalues of $\tilde{V}$. Thus, according to the Simon-Duan criterion, the violation of uncertainty relation, i.e $\tilde{\nu}_{-}\ngeq 1/2$, implies entanglement of bipartite states. It was shown that Simon-Duan criterion is a necessary and sufficient condition for entanglement of Gaussian states. However, for non-Gaussian states, the satisfaction of inequality does not necessarily guarantee the state is separable.

To deal with the entanglement detection of non-Gaussian states, improved entanglement criteria were suggested in \cite{Nha2008,Walborn2009,Hertz2016,Nha2006,Nha2007}. 
For the same purpose, we suggest an entanglement criterion based on our NE uncertainty relation.  
According to Williamson theorem \cite{Williamson1936}, one can always find appropriate symplectic transformation diagonalizing it such that
\begin{align}
\tilde{V} \rightarrow \tilde{V}_d = S \tilde{V} S^T= \begin{pmatrix}
\tilde{\nu}_+ I & 0 \\
0 & \tilde{\nu}_- I
\end{pmatrix},
\end{align}
where $I$ is 2$\times$2 identity matrix. Thus, the violation of uncertainty relation, $\tilde\nu_-\ngeq 1/2$, is equivalent to observing $\hat{\sigma}^B=\text{Tr}_A[\hat U_S (\hat\rho^{AB})^{T_B} \hat U_S^\dagger]$ violates the RS uncertainty relation as pointed out in \cite{Hertz2016}, where $\hat U_S$ is the unitary operator corresponding to $S$. 
By applying this method to our NE uncertainty relation for single mode, one can obtain improved entanglement criterion for non-Gaussian states,
\begin{align}
h(\tilde{\nu}_-)\ngeq S(\sigma^B)+\mathcal N (\sigma^B)\rightarrow \text{entangled}.
\end{align}
In general, it is challenging to solve eigenvalue problem and we can instead use weaker but readily computable NE uncertainty relation \eqref{NEUR2} as 
\begin{align}\label{EntCri}
h(\tilde{\nu}_-)\ngeq -\ln \mu+\mathcal N_g (\sigma^B)\rightarrow \text{entangled}.
\end{align}
It is worth noting that one may encounter unphysical purity, i.e. $\mu>1$, in the process of verifying the inequality under partial transposition. In this case, one can conclude that the state is entangled. $\mu>1$ immediately indicates $\hat \sigma_B$ is unphysical.

\begin{figure}[t]
	\centering
	\includegraphics[scale =0.32]{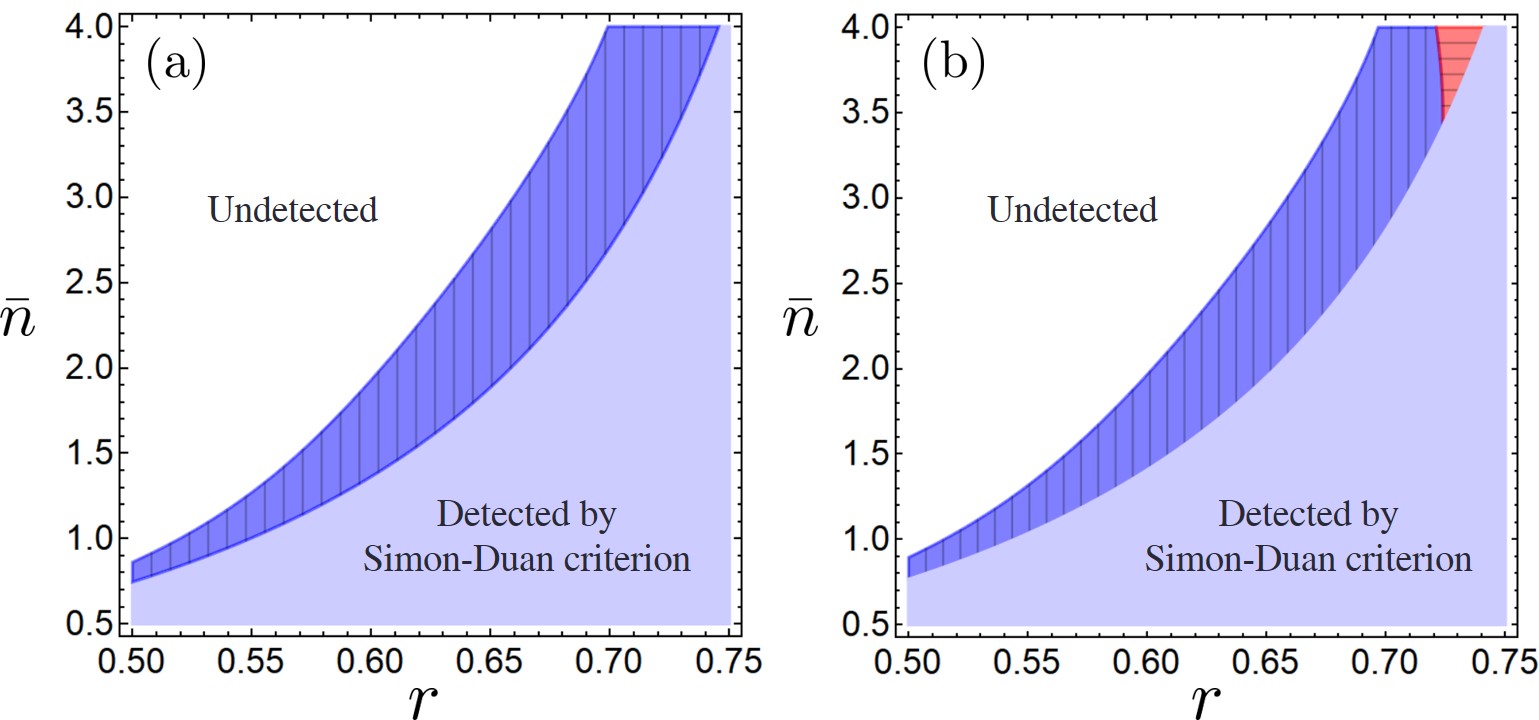}
	\caption{Plots of entanglement conditions for (a) $\alpha=0$ and (b) $\alpha=1$ detected by the Simon-Duan (Light blue) and the non-Gaussianity based criteria (Dashed blue and red) \eqref{EntCri} against the squeezing parameter $r$ and mean photon number $\bar n$ of a thermal state $\hat \tau^B(\bar{n})$.}\label{FigTMS}
\end{figure}

As an example, let us consider odd cat and thermal states coupled via two-mode squeezing operation, 
\begin{align}
\hat S^{AB}(\xi)\big( |\psi_-\rangle^A\langle\psi_-|\otimes \hat \tau^B(\bar{n}) \big) \hat S^{AB}(\xi)^\dagger,
\end{align}
where $\hat S^{AB}(\xi)=\exp{(\xi\hat a^\dagger\hat b^\dagger-\xi^* \hat a \hat b)}$ with the complex coupling $\xi=r e^{i\phi}$. For simplicity, we assume that they are squeezed along the direction of $\alpha$. In that case, we can put $\alpha=|\alpha|$ and $\xi=r$ without loss of generality. Here, $\alpha$ and $r$ determines the degree of non-Gaussianity and entanglement, respectively, while mean photon number $\bar n$ gives mixedness of overall states by adding thermal noise on it.

We show entanglement conditions with respect to $r$ and $\bar n$  detected via the non-Gaussianity based entanglement criterion \eqref{EntCri} for $\alpha=0,1$ in Fig. \ref{FigTMS}. 
In both cases, graphs show that our entanglement criterion (dashed blue and red) discovers undetected region by Simon-Duan criterion (light blue).
Additionally, we note the region denoted by dashed red on the right corner of Fig. \ref{FigTMS}(b) indicates entanglement discovered by observing unphysical purity larger than unity.

\section{Conclusion}

With the inequality \eqref{NEUR} we have provided an improved uncertainty relation by taking into account the degree of non-Gaussianity and mixedness that are quantified based on fidelity and von Neumann entropy, respectively. We have shown that this inequality, so-called NE uncertainty relation, includes RS relation with invariance under linear canonical transformations, and further it is saturated by all Gaussian and number states. To avoid challenging eigenvalue problems, we have also presented weaker but readily computable inequality \eqref{NEUR2} by using attainable quantities in phase-space description. We have pointed out that our uncertainty relations provide a strictly stronger bound for a non-Gaussian state even using the weaker version of inequality.  We have generalized the NE uncertainty relations to the case of a two-mode system and exhibited its enhanced restrictions on symplectic eigenvalues according to the increase of non-Gaussanity and mixedness.

As an application of our uncertainty relation, we have considered entanglement detection of non-Gaussian states. Due to the property that it gives strictly stronger bound for non-Gaussian states, we have obtained an improved version of Simon-Duan criterion \eqref{EntCri} for non-Gaussian states. To examine how it works, we have considered odd cat states coupled to thermal states by a two-mode squeezing operation. As a result, we have seen that ours can discover entangled states undetected by the Simon-Duan criteron.

\begin{acknowledgments}
We thank Prof. Son for discussions on the topic. This work is supported by the NPRP Grant No. 8-352-1-074 from the Qatar National Research Fund.
\end{acknowledgments}

\bibliography{mybibfile}

\end{document}